\theoremstyle{plain}
  \newtheorem{theorem}{Theorem}[section]
  \newtheorem{corollary}[theorem]{Corollary}
  \newtheorem{lemma}[theorem]{Lemma}
\theoremstyle{definition}
\theoremstyle{remark}
  \newtheorem{remark}[theorem]{Remark}
\numberwithin{equation}{section}
 \let\Ga=\Gamma \let\La=\Lambda
\newcommand{\caA}{{\mathcal A}}
\newcommand{\caB}{{\mathcal B}}
\newcommand{\caG}{{\mathcal G}}
\newcommand{\caH}{{\mathcal H}}
\newcommand{\caK}{{\mathcal K}}
\newcommand{\caL}{{\mathcal L}}
\newcommand{\caS}{{\mathcal S}}
\newcommand{\caV}{{\mathcal V}}
\newcommand{\caW}{{\mathcal W}}
\newcommand{\bbC}{{\mathbb C}}
\newcommand{\bbR}{{\mathbb R}}
\newcommand{\bbZ}{{\mathbb Z}}
\newcommand{\bsE}{{\boldsymbol E}}
\DeclareMathAlphabet{\mathpzc}{OT1}{pzc}{m}{it}
\newcommand{\lVERT}{\lvert\mkern-2mu\lvert\mkern-2mu \lvert}
\newcommand{\rVERT}{\rvert\mkern-2mu\rvert\mkern-2mu \rvert}
\DeclareMathOperator{\Tr}{Tr}
\newcommand{%
      \input{.pdf_tex}}[1]{%
      \input{#1.pdf_tex}}
\newcommand{%
        \def\svgwidth{}%
        \input{.pdf_tex}}[2]{%
        \def\svgwidth{#1}%
        \input{#2.pdf_tex}}
\newcommand{\beq}{\begin{equation}}
\newcommand{\eeq}{\end{equation}}
\newcommand{\baq}{\begin{eqnarray}}
\newcommand{\eaq}{\end{eqnarray}}
\newcommand{\e}{\mathrm{e}}
\newcommand{\opunit}{\text{1}\kern-0.22em\text{l}}
\begin{document}

\title{Uniqueness regime for Markov dynamics on quantum lattice spin systems}
\author{N.~Crawford$^1$}
\email[N.~Crawford]{nickc@tx.technion.ac.il}
\author{W.~De Roeck$^2$}
\email[W.~De Roeck]{wojciech.deroeck@fys.kuleuven.be}
\author{M.~Sch\"utz$^2$}
\email[M.~Sch\"utz]{marius.schutz@fys.kuleuven.be}

\maketitle
\markboth{N.~Crawford, W.~De Roeck, and M.~Sch\"utz}{Uniqueness regime for quantum Markov dynamics}
\vspace{-11pt}
 \begin{center} {\footnotesize  
    \textit{$^1$ Mathematics Department, The Technion, Haifa (Israel).}
 
 \textit{$^2$ Institute for Theoretical Physics, KU Leuven, Leuven (Belgium).} }\end{center}

\begin{abstract}
We consider a lattice of weakly interacting quantum Markov processes. Without interaction, the dynamics at each site is relaxing exponentially to a unique stationary state. 
With interaction, we show that there remains a unique stationary state in the thermodynamic limit, i.e.\ absence of phase coexistence, and the relaxation towards it is exponentially fast for local observables. We do not assume that the quantum Markov process is reversible (detailed balance) w.r.t.\ a local Hamiltonian. 
\end{abstract}

\section{Introduction}

Stochastic Markov dynamics is a crucial modeling device for (non-equilibrium) statistical mechanics, as well as a technical tool for simulation purposes, i.e.\ Monte Carlo algorithms.
In the past years, Markov evolutions have been also considered for many-body quantum systems, see e.g.~\cite{AFH,AHHH,brandao,KBDKMZ,Mat,MOZ,P,TOVP,TWV,ZBC}.
A lot remains to be clarified about their physical origin and justification,
but  in the present paper  we take a top-down approach and we assume that a Markovian dynamics is given. Its generator (Lindblad operator) is a sum of quasi-local terms.  We allow this dynamics to be non-equilibrium in that it does not need to satisfy a detailed balance condition, i.e.\ we do not ask that it is self-adjoint w.r.t.\ to a KMS state (Gibbs state). We are motivated by the following question: When does the Markov dynamics allow for phase coexistence, i.e.\ multiple stationary states in the thermodynamic limit?
A property that appears often together with uniqueness, i.e.\ no phase coexistence,  is (uniform) exponential relaxation to the stationary state, meaning that expected values of local observables approach their stationary values exponentially fast independent of the volume and boundary conditions. This notion and the related property of `rapid mixing', which was important in recent work \cite{BCLMP,CLMP,KE} on similar questions as ours, is made more precise in the main part of the text. 
 
As such, the question of phase uniqueness is also interesting for classical systems.  One could conjecture that for any detailed balance dynamics in the uniqueness regime (e.g.~at high enough temperature for Glauber type dynamics, etc.) the uniqueness is stable against small non-detailed balance perturbation.  To our knowledge, this has yet been proven in any generality. Nevertheless, there are numerous results on uniqueness and uniform exponential relaxation (and log-Sobolev inequalities), in particular in the regime of `complete analyticity'.  We refer to works like \cite{guionnetzegarlinski,MS, Mart,zegarlinskigeneral} and references therein.

On the quantum side,  uniqueness and exponential approach have been established in \cite{MOZ,MZ} for small perturbations of product dynamics with specific generators of the form $E_X-\opunit$, where $E_X$ maps arbitrary operators into operators that act as (multiple of) the identity within the spatial set $X$. 
In \cite{KT}, exponential relaxation is proven starting from a log-Sobolev inequality, and in \cite{RW}, exponential relaxation was obtained for a class of cellular automata.

Our work is concerned with general perturbations of product dynamics (a classical analogue was treated in \cite{netocnymaesergodicity}).  We show that the perturbed dynamics remains exponentially ergodic with a unique (infinite-volume) stationary state, see Theorem \ref{thm}. Additionally, if the dynamics depends analytically on some parameter, then so does the stationary state. Boundary conditions do not affect the thermodynamic limit of the dynamics and stationary state, which is the content of Corollary \ref{cor}.

\subsection*{Acknowledgements} It is a pleasure to thank Christian Maes for interesting discussions. WDR and MS are thankful for financial support from the DFG (German Research Foundation). NC is supported in part by ISF grant No. 915/12.

\section{Definition of the Dynamics}

\subsubsection*{Quantum Spin System}
We consider quantum spin systems on the the $d$-dimensional lattice $\bbZ^d$, $d\geq 1$. The lattice distance between two sites $x,y\in\bbZ^d$ is denoted by $d(x,y)$ and we use the same notation for the minimal distance between two sets of sites. For every site $x\in\bbZ^d$ let $\caH_{x}$ be a copy of a finite dimensional Hilbert space $\caH$. The operators $\caA_x:=\caB(\caH_{x})$ equipped with operator norm $\lVert\,\cdot\,\rVert$ constitute the algebra of single-site observables. 
We use the symbol $\Subset$ to indicate finite subsets.
The physical Hilbert space and the algebra of observables for our quantum spin system in a volume $\Lambda\Subset \bbZ^d$ are then given by the tensor products 
\begin{equation}
 \caH_\La := \textstyle{\bigotimes_{x\in\La}} \caH_{x}, \qquad \caA_\La:=\textstyle{\bigotimes_{x\in\La}} \caA_{x}.
\end{equation} 
For $S\subset\La$ each $A\in\caA_{S}$ can be embedded canonically in $\caA_\La$ as the local observable $A\otimes \opunit$, which, we say, has support in $S$. In the same way also each (super-)operator $O\in\caB(\caA_{S})$ will be identified with a local operator on $\caB(\caA_{\La})\cong \caB(\caA_{S})\otimes\caB(\caA_{\Lambda\setminus S})$ without explicitly mentioning it. The quasi-local algebra $\caA$ is the $C^*$ algebra defined as the operator norm closure of the inductive limit $\bigcup_\La \caA_\La$ with subsets $\La\Subset\bbZ^d$ ordered by inclusion. A state is a normalized positive functional on a $C^*$ algebra. The operator norm on $\caB(\caA_{\La})$ 
is again denoted by $\lVert\,\cdot\, \rVert$. 
Note that the norm of a local operator $O\in\caB(\caA_\Lambda)$ with support in $S\subset\Lambda$ may increase with the ambient volume $\Lambda\Subset\bbZ^d$, which motivates the definition of the completely bounded norm $\lVert O \rVert_{\mathrm{cb}}:=\sup_\Lambda \lVert O\rVert$ on the space of local operators\footnote{Since however $\lVert O\rVert_{\mathrm{cb}}\leq \dim(\caH_S) \lVert O\rVert$, where the right hand side means the norm in $\caB(\caH_S)$, which is shown for example in the textbook \cite{Pa} (Proposition 8.11), we use the the completely bounded norm mostly to facilitate volume-independent notation.}.

\subsubsection*{Unperturbed Product Dynamics}
For each $x\in\bbZ^d$ let $\caG(x)\in\caB(\caA_{x})$ be the generator of a semigroup $\e^{t \caG(x)}$, $t\geq 0$, of completely positive and identity preserving operators on $\caA_{x}$, i.e.~a quantum Markov semigroup (QMS) in the Heisenberg picture. 

Our \emph{main assumption} is that $0$ is a simple eigenvalue of each $\caG(x)$ with corresponding rank one projection $Q_x$ and that there exist constants $M,g>0$ such that 
\begin{equation}\label{eq: AssDecay}
 \sup_{x\in\bbZ^d}\bigl\lVert  \e^{\caG(x)t}(\opunit-Q_x) \bigr\rVert_{\mathrm{cb}} \leq M \e^{-gt}
\end{equation}
for all times $t\geq 0$.

As an immediate consequence, there is a (unique) state $\varrho_{\caG(x)}$ on $\caA_x$ such that
\begin{equation}
 \lim_{t\rightarrow \infty} \sigma \bigl( \e^{t \caG(x)} A \bigr)=\varrho_{\caG(x)} (A)
\end{equation}
for all initial states $\sigma$ and observables $A\in\caA_x$. In such a situation, we shall say that the semigroup $\e^{t \caG(x)}$ is relaxing to a unique stationary state.
For a volume $\La\Subset\bbZ^d$, the non-interacting dynamics on $\Lambda$ is generated by the sum of local generators 
\begin{equation}
 \caG_\La:=\sum_{x\in\La} \caG(x).
\end{equation}
The corresponding semigroup is relaxing to the unique stationary state 
\begin{equation}
 \varrho_{\caG,\La}:={\textstyle \bigotimes_{x\in\La} }\varrho_{\caG(x)}.
\end{equation}

\begin{remark} Every QMS generator $\caG$ on $\caA_x$ can be written in the Lindblad form 
\begin{equation}
 \caG(A)= i[H,A]+\sum_m K^*_m A K_m^{}-{\textstyle \frac{1}{2}\{K^*_mK_m^{}, A\}}, \qquad A\in\caA_x, 
\end{equation}
with a self-adjoint $H\in\caA_x$ and a finite family of (Kraus) operators $\{K_m\}_m$. Let $\caK$ be the algebra generated by this family. If all $\caG(x)$ are merely local copies of a generator $\caG$ as above, then the irreducibility property $\caK \psi=\caH_x$, for all $\psi\in\caH_x$, is a convenient sufficient condition for our \emph{main assumption}. See e.g.~the introductory part of \cite{JPW} for more details.  
\end{remark}

\subsubsection*{The Perturbation} The perturbation is defined by a family of local QMS generators $\{\caV(\Ga)\}$, $\Ga\Subset\bbZ^d$, $\caV(\Ga)\in \caB(\caA_\Ga)$. We assume that $\caV(\Ga)=0$ whenever $\Gamma$ is not a connected subset of $\bbZ^d$ or if $\Gamma$ is the empty set and that the norm of $\caV(\Ga)$ decays exponentially in the size of $\Ga$:
\begin{equation}\label{eq. NormInteraction}
 \lVERT  \caV \rVERT_l:=\sup_{x\in\bbZ^d} \sum_{\Ga\ni x}\e^{\lvert\Gamma\rvert/l}\,\lVert \caV(\Ga) \rVert_{\mathrm{cb}}\leq \epsilon,
\end{equation}
for some decay length $l >0$.  The constant $\epsilon$ may be viewed as the second natural energy scale  in our setup besides the decay rate $g$. 
The above condition implies that the sum of all interactions affecting a single site $x$ is bounded in norm by $\epsilon/\e^{1/l}$. The proof of our main theorem indeed does not require that each local term $\caV(\Gamma)$ generates a QMS, but only that it annihilates the identity, and that globally
\begin{equation}
 \caV_\La:=\sum_{\Ga\subset\La}\caV(\Ga)
\end{equation}
is a QMS generator for every volume $\Lambda$.

\subsubsection*{Boundary Conditions} Boundary conditions are implemented by a family of QMS generators $\{\caW(\Ga)\}$, $\Ga\Subset\bbZ^d$, $\caW(\Ga)\in \caB(\caA_\Ga)$. Again it is assumed that $\caW(\Ga)=0$ whenever $\Gamma$ is not a connected subset of $\bbZ^d$ or if $\Gamma$ is the empty set and that it satisfies {$\lVERT  \caW \rVERT_b<\infty$} for some (not necessarily small) decay length {$b>0$}. 

\section{Results}

\subsection{Open Boundary Conditions}

The weakly interacting QMS in the volume $\La\Subset\bbZ^d$ and open boundary conditions is generated by
\begin{equation}\label{eq: Generator}
 \caL_\La:=\caG_\La + \caV_\La.
\end{equation}
Let $\La\nearrow\bbZ^d$ stand for the thermodynamic limit along any sequence of finite volumes $\La$ such that every $\Delta\Subset\bbZ^d$ is a subset of almost all of them. 

\begin{theorem}\label{thm}
Assume $1/l>\log 2$ and define another decay length 
 \begin{equation}\label{eq: DecayLength}
  l'=\frac{1}{1/l -\log 2}.
 \end{equation}
 Assume further that $\epsilon < g$ and choose a decay rate $g'>0$ such that  $\epsilon< g-g'$. {We also abbreviate
 \begin{equation}\label{eq: Konst}
   K= \frac{g-g'}{g-g'-\epsilon},
 \end{equation}
 then the following holds for some constant $C>0$ that may only depend on $M$ defined in equation \eqref{eq: AssDecay}: }
 \begin{enumerate}
 \item there is a unique strongly continuous quantum Markov semigroup $\caS_t$ on the quasi-local algebra $\caA$, such that
 \begin{equation}\label{eq: ExistenceDyn}
  \bigl\lVert \e^{t\caL_{\Lambda}}(A)-\caS_t(A)\bigr\rVert\leq {(K-1)} \e^{- d(X,\Lambda^c)/l'} C^{\lvert X\rvert} \lVert A\rVert,
 \end{equation}
for all $t\geq 0$, $\Lambda\Subset \bbZ^d$, and local observables $A$ with support in $X\subset\Lambda$. 
 \item $\caS_t$ is relaxing to a unique stationary state $\varrho_{\bbZ^d}$ exponentially according to
\begin{equation}\label{eq: DecayTime}
 \bigl\lVert \caS_t(A)-\varrho_{\bbZ^d}(A)\textnormal{\opunit} \bigr\rVert\leq {K} \e^{-g' t} C^{\lvert X\rvert} \lVert A\rVert.
\end{equation}
\item  $\e^{t\caL_\Lambda}$ is relaxing to a unique stationary state $\varrho_\Lambda$, for every finite volume $\Lambda$, and furthermore
\begin{equation}
 \varrho_{\bbZ^d}=\lim_{\La\nearrow\bbZ^d}\varrho_\Lambda
\end{equation}
is the unique weak* thermodynamic limit.
\item Truncated correlations decay exponentially in space for the stationary state:
 \begin{equation}\label{eq: ExpDecayCorr}
  \bigr\rvert\varrho_{\bbZ^d}(AB)-\varrho_{\bbZ^d}(A)\varrho_{\bbZ^d}(B)\bigr\rvert\leq {(K-1)} \e^{ - d(X,Y)/l'} C^{\lvert X\rvert+\lvert Y\rvert} \lVert A\rVert\lVert B\rVert
 \end{equation}
for all pairs of local observables $A,B$ with support in $X,Y\Subset\bbZ^d$, $X\cap Y=\emptyset$.
\end{enumerate}
\end{theorem}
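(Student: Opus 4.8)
The plan is to expand the finite-volume semigroup $\e^{t\caL_\La}$ perturbatively around the product dynamics $\e^{t\caG_\La}$ and then to reorganise the expansion as a sum over connected clusters of interaction terms. Differentiating $\e^{(t-s)\caG_\La}\e^{s\caL_\La}$ in $s$ and iterating the resulting Duhamel identity yields the Dyson series
\[
 \e^{t\caL_\La}=\sum_{n\geq 0}\;\sum_{\Ga_1,\dots,\Ga_n}\;\int_{0\leq s_1\leq\cdots\leq s_n\leq t}\e^{(t-s_n)\caG_\La}\,\caV(\Ga_n)\cdots\caV(\Ga_1)\,\e^{s_1\caG_\La}\,\d s_1\cdots\d s_n,
\]
where each $\Ga_j$ ranges over the connected subsets of $\La$. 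Since $\caG_\La=\sum_x\caG(x)$, the free propagator factorises as $\e^{s\caG_\La}=\bigotimes_x\e^{s\caG(x)}$, and because every $\e^{s\caG(x)}$ preserves the identity, a site $x$ lying neither in the support $X$ of $A$ nor in any $\Ga_j$ acts trivially. Hence only the finite set $X\cup\Ga_1\cup\cdots\cup\Ga_n$ enters each term, which is the source of locality.

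At each relevant site I insert the splitting $\e^{s\caG(x)}=Q_x+\bigl(\e^{s\caG(x)}-Q_x\bigr)$, where the complementary factor satisfies $\lVert\e^{s\caG(x)}-Q_x\rVert_{\mathrm{cb}}\leq M\e^{-gs}$ by the main assumption \eqref{eq: AssDecay}, using $\e^{s\caG(x)}Q_x=Q_x$. This produces two complementary mechanisms: the projection $Q_x$ freezes the observable at its stationary value and is time independent, whereas the complementary factor supplies an exponential decay $\e^{-gs}$ over the corresponding time interval. For the temporal estimate I reserve a slower rate $g'$ by writing $\e^{-gs}=\e^{-g's}\e^{-(g-g')s}$; the factor $\e^{-(g-g')s}$ renders the time-ordered integrals finite, each vertex $\caV(\Ga_j)$ contributing $\lVert\caV(\Ga_j)\rVert_{\mathrm{cb}}/(g-g')$ after integration. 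Summation over $n$ then gives the geometric series $\sum_{n}\bigl(\epsilon/(g-g')\bigr)^n$, which converges precisely because $\epsilon<g-g'$ and sums to the constant $K$ in \eqref{eq: Konst}. In the limit $t\to\infty$ the fully projected term (all factors $Q_x$) survives and defines the stationary functional $\varrho_{\bbZ^d}$, while every other term carries at least one decaying propagator and is bounded by $\e^{-g't}$; this yields the relaxation estimate \eqref{eq: DecayTime} with prefactor $K$.

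The spatial decay comes from the combinatorics of the vertex sum. Only clusters whose union $\Ga_1\cup\cdots\cup\Ga_n$ is connected and meets $X$ contribute, and the weight $\e^{\lvert\Ga\rvert/l}$ built into $\lVERT\caV\rVERT_l$ in \eqref{eq. NormInteraction} lets me trade amplitude against entropy: since $\lVert\caV(\Ga)\rVert_{\mathrm{cb}}\leq\epsilon\,\e^{-\lvert\Ga\rvert/l}$ and the number of admissible cluster configurations grows no faster than $2^{\lvert\Ga\rvert}$ in the total size, one has $2^{\lvert\Ga\rvert}\lVert\caV(\Ga)\rVert_{\mathrm{cb}}\leq\epsilon\,\e^{-\lvert\Ga\rvert/l'}$, using $1/l'=1/l-\log 2$ from \eqref{eq: DecayLength}. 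This is exactly why $1/l>\log 2$ is required. For assertion (1) the discrepancy $\e^{t\caL_\La}(A)-\caS_t(A)$ is generated by clusters that reach from $X$ to $\La^c$; any such cluster has total size at least $d(X,\La^c)$, so the residual weight contributes $\e^{-d(X,\La^c)/l'}$, and since at least one vertex is needed the geometric series starts at $n=1$ and gives the prefactor $K-1$. That $\{\e^{t\caL_\La}(A)\}_\La$ is Cauchy uniformly in $t$ follows, which defines $\caS_t$ and proves \eqref{eq: ExistenceDyn}.

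Assertions (2) and (3) now follow by taking $\La\nearrow\bbZ^d$ in the bounds just obtained, the finite-volume stationary states $\varrho_\La$ converging weak* to $\varrho_{\bbZ^d}$ by uniqueness of the limit. For the correlation bound \eqref{eq: ExpDecayCorr} of assertion (4) I apply the same cluster estimate to $\varrho_{\bbZ^d}(AB)-\varrho_{\bbZ^d}(A)\varrho_{\bbZ^d}(B)$, evaluated as the $t\to\infty$ limit of $\caS_t(AB)-\caS_t(A)\caS_t(B)$: the product terms cancel unless a cluster connects $X$ to $Y$, which, by disjointness, forces total size at least $d(X,Y)$ and hence the factor $\e^{-d(X,Y)/l'}$ with prefactor $K-1$. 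The principal obstacle is the uniform bookkeeping that keeps the spatial factor $\e^{-d/l'}$, the temporal factor $\e^{-g't}$, and a volume- and boundary-independent prefactor $C^{\lvert X\rvert}$ simultaneously under control: one must verify that reserving the rate $g'$ in the time integrals does not interfere with the entropy budget spent on the spatial sum, and that the resulting constant $C$ depends only on $M$. Everything else reduces to the two convergence conditions $1/l>\log 2$ and $\epsilon<g-g'$.
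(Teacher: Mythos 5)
Your proposal follows the same route as the paper: the Dyson expansion around the product dynamics, the site-wise splitting $\e^{s\caG(x)}=Q_x+\bigl(\e^{s\caG(x)}-Q_x\bigr)$, the sacrifice of a rate $g'$ in the time integrals, the absorption of a $2^{\lvert\Gamma\rvert}$ entropy factor into the passage from $l$ to $l'$, and the resulting geometric series with ratio $\epsilon/(g-g')$ producing the constants $K$ and $K-1$; the way assertions (1)--(4) are extracted from these estimates also matches the paper's.

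The one genuine gap is precisely the point you defer as ``the principal obstacle'', and as stated your per-vertex accounting does not close. You assign each vertex the weight $\lVert\caV(\Gamma_j)\rVert_{\mathrm{cb}}/(g-g')$ from the time integration and bound the configurational entropy by $2^{\lvert\Gamma_j\rvert}$. The latter only counts the choices of which sites of $\Gamma_j$ are excited after the vertex acts; it does not count the choice of \emph{where} $\Gamma_j$ attaches to the cluster grown so far. Since $\Gamma_j$ must intersect the set $E_j$ of currently excited sites, summing over admissible $\Gamma_j$ via $\sum_{x\in E_j}\sum_{\Gamma\ni x}(\cdots)$ costs an additional factor $\lvert E_j\rvert$, and $\lvert E_j\rvert$ can grow linearly in $j$, so $\prod_j\lvert E_j\rvert$ is of factorial order and would ruin the $t$-uniform geometric series. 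The paper removes this factor by exploiting that \emph{all} sites of $E_j$ relax simultaneously during the interval $(t_{j-1},t_j)$, so that the time integral produces $1/\bigl((g-g')\lvert E_j\rvert\bigr)$ rather than $1/(g-g')$; the $1/\lvert E_j\rvert$ exactly cancels the attachment sum (this is the interplay between Lemma~\ref{lem: One} and the lemma following it). With that refinement inserted, the rest of your argument goes through as written.
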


The theorem is proven below by a basic perturbative expansion of the dynamics exploiting the uniform local relaxation of the product dynamics for the temporal bound and the exponential decay of the interaction for the spatial bound.

\begin{remark}\label{rem: FinRange}
The definitions of the time scale $1/\epsilon$ and spatial scale $l$ are obviously not independent. The situation is more transparent for finite range interactions. In that case, the constraint $1/l>\log 2$ is no longer meaningful, since $\lVERT \caV \rVERT_l$ is finite for every $l$. Our result then can be roughly summarized in the following expected form: one can take any $g'< g$ and any $l'$ provided that the interaction satisfies 
\begin{equation}\lVERT \caV\rVERT_0 \lesssim (g-g')\e^{-R^d/l'}\end{equation}
where $R$ is the range of the interaction.
\end{remark}

\begin{remark}\label{rem: Analyticity}
If the generators $\caG(x)$ and $\caV(\Gamma)$ depend analytically on some parameters without violating the assumptions throughout an (open complex) parameter domain, then the expansions used in the proof converge uniformly. As a consequence, also the dynamics and stationary state, i.e.~$\caS_t(A)$ and $\varrho_{\bbZ^d}(A)$, for local observables $A$, depend analytically on these parameters. More precisely, this holds if the operators $\caG(x)$ and $\caV(\Gamma)$ uniformly satisfy \eqref{eq: AssDecay} and \eqref{eq. NormInteraction} and as long as they annihilate the identity $\opunit$. For a non-trivial analytic dependence on the parameters, note that $\e^{t\caL_\Lambda }$ and $\varrho_{\Lambda}$ cannot be positive for the whole domain. { Furthermore, the theorem's decay estimates remain valid throughout the parameter domain, which we however can only prove for decay lengths $l'$ that are larger than \eqref{eq: DecayLength}. The proof of the theorem is given for this more general setting and shows that $1/(a_1 l)-a_2$, for some $a_1,a_2>1$, is a lower bound on $1/l'$. }
\end{remark}

Let us also comment on the relation with some interesting recent work on extended quantum Markov dynamics. In \cite{CLMP}, the authors do not only study perturbations of products as we do, but more generally of (locally generated) dynamics $\e^{t\caL_\Lambda}$ which satisfy `rapid mixing'. In case of a unique stationary state $\varrho_\Lambda$ for each volume $\Lambda$, this condition means that
\begin{equation}
 \bigl\lVert \e^{t\caL_\Lambda} (A) - \varrho_\Lambda (A) \opunit \bigr\rVert \leq C(\lvert \Lambda \rvert) \,\e^{-g t}
\end{equation}
holds for all observables $A\in\caA_\Lambda$ and with a function $C(\lvert \Lambda \rvert)$ growing only polynomially in the volume. As a main result, it is shown in \cite{CLMP} that rapidly mixing systems are stable according to
\begin{equation}
  \bigl\lVert \e^{t\caL_\Lambda} (A) -  \e^{t\caL_\Lambda'} (A) \opunit \bigr\rVert \leq C(\lvert X \rvert) \bigl(\epsilon +b(d(X,\Lambda))\bigr)
\end{equation}
for all local observables with support $X\subset \Lambda$, where $\caL_\Lambda'$ is the perturbed generator and $b(d)\geq 0$ a decaying boundary correction. On the way to this result they also prove that rapid mixing implies the existence of the thermodynamic limit of stationary states. Note however that it is unclear and an interesting question whether perturbations of product dynamics as studied here satisfy rapid mixing. It does not follow from our result as we only obtain estimates with prefactors that grow exponentially in the support of local observables. This question is interesting also because it has been established that rapid mixing furthermore implies exponential decay of correlations and an area law (for the `mutual information') for the stationary state, see \cite{BCLMP, CLMP, KE}.\\

\subsection{Other Boundary Conditions}

The result on the existence of the infinite volume dynamics $\caS_t$ is already contained in the work of Nachtergaele et al.~\cite{NVZ} applying Lieb--Robinson propagation bounds \cite{LR} for QMS; see also \cite{H,P} for similar results in the context of (classical) Markov Processes. In fact, we use their result to prove, as a corollary of the theorem, independence from boundary conditions for the thermodynamic limit of stationary states.
For that purpose we define the following QMS generator in volume $\Lambda'\Subset\bbZ^d$ with boundary conditions outside a bulk volume $\Lambda\subset\Lambda'$,
\begin{equation}
 \caL_{\Lambda,\Lambda'}:=\caL_{\Lambda'}+\sum_{\Gamma\subset\Lambda'\setminus\Lambda}\caW(\Gamma).
\end{equation}

\begin{corollary}\label{cor}
  Under the conditions of the theorem, there are constants ${C},\tilde{l}>0$, such that the following holds for all $\Lambda\subset\Lambda'\Subset\bbZ^d$: let $\sigma$ be a stationary state for $\e^{t\caL_{\Lambda,\Lambda'}}$, then
 \begin{equation}
   \bigl\lvert \sigma(A)-\varrho_{\bbZ^d}(A) \bigr\rvert  \leq \e^{-d(X,\Lambda^c)/\tilde{l}} {C}^{\lvert X\rvert}\lVert A\rVert
 \end{equation}
 for all local observables $A$ with support in $X\subset\Lambda$, where the state $\varrho_{\bbZ^d}$ is defined as in the theorem.
\end{corollary}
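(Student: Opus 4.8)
The plan is to turn the static comparison of states into a dynamical one via stationarity, and then to interpolate between the boundary dynamics $\e^{t\caL_{\Lambda,\Lambda'}}$ and the infinite-volume dynamics $\caS_t$, using the two estimates of the theorem together with a Lieb--Robinson bound to absorb the boundary generators $\caW$. Since $\sigma$ is stationary for $\e^{t\caL_{\Lambda,\Lambda'}}$ we have $\sigma(A)=\sigma\bigl(\e^{t\caL_{\Lambda,\Lambda'}}(A)\bigr)$ for all $t\geq 0$, so that, using $\sigma(\opunit)=1$,
\begin{equation}
\sigma(A)-\varrho_{\bbZ^d}(A)=\sigma\bigl(\e^{t\caL_{\Lambda,\Lambda'}}(A)-\e^{t\caL_{\Lambda'}}(A)\bigr)+\sigma\bigl(\e^{t\caL_{\Lambda'}}(A)-\caS_t(A)\bigr)+\sigma\bigl(\caS_t(A)-\varrho_{\bbZ^d}(A)\opunit\bigr).
\end{equation}
Because $\sigma$ is a state, each summand is bounded in modulus by the operator norm of the corresponding difference; I would estimate the three contributions separately and optimize over the free parameter $t$ at the very end.

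The last two contributions are handled directly by the theorem. The middle term is controlled by \eqref{eq: ExistenceDyn} applied in the volume $\Lambda'$: since $X\subset\Lambda\subset\Lambda'$ one has $d(X,(\Lambda')^c)\geq d(X,\Lambda^c)$, so this term is at most $(K-1)\e^{-d(X,\Lambda^c)/l'}C^{\lvert X\rvert}\lVert A\rVert$, \emph{uniformly in} $t$. The last term is precisely the relaxation estimate \eqref{eq: DecayTime}, bounded by $K\e^{-g't}C^{\lvert X\rvert}\lVert A\rVert$, which decays as $t$ grows.

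The real work is in the first term, where the boundary generators must be removed. I would use the Duhamel identity
\begin{equation}
\e^{t\caL_{\Lambda'}}(A)-\e^{t\caL_{\Lambda,\Lambda'}}(A)=-\int_0^t \e^{(t-s)\caL_{\Lambda,\Lambda'}}\Bigl(\sum_{\Gamma\subset\Lambda'\setminus\Lambda}\caW(\Gamma)\Bigr)\e^{s\caL_{\Lambda'}}(A)\,\d s,
\end{equation}
and, since $\e^{(t-s)\caL_{\Lambda,\Lambda'}}$ is completely positive and unital and hence an operator-norm contraction, estimate
\begin{equation}
\bigl\lVert \e^{t\caL_{\Lambda'}}(A)-\e^{t\caL_{\Lambda,\Lambda'}}(A)\bigr\rVert\leq \int_0^t \sum_{\Gamma\subset\Lambda'\setminus\Lambda}\bigl\lVert \caW(\Gamma)\,\e^{s\caL_{\Lambda'}}(A)\bigr\rVert\,\d s.
\end{equation}
Each $\caW(\Gamma)$ annihilates observables that act as the identity on $\Gamma$, so $\caW(\Gamma)\e^{s\caL_{\Lambda'}}(A)$ only registers the part of the evolved observable that has propagated from $X$ onto $\Gamma\subset\Lambda'\setminus\Lambda$. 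This is exactly what the Lieb--Robinson bounds for quantum Markov semigroups of \cite{NVZ,LR} quantify: the summand is bounded by $\lVert\caW(\Gamma)\rVert_{\mathrm{cb}}$ times a factor growing like $\e^{vs}$ in time but decaying exponentially in $d(X,\Gamma)\geq d(X,\Lambda^c)$. Summing over $\Gamma$ against the exponential weight contained in $\lVERT\caW\rVERT_b<\infty$ and integrating over $s\in[0,t]$ yields a bound of the form $C_1^{\lvert X\rvert}\,\e^{vt}\,\e^{-d(X,\Lambda^c)/\xi}\lVert A\rVert$, uniformly in $\Lambda'$, for suitable $v,\xi,C_1$ determined by $\lVERT\caV\rVERT_l$ and $\lVERT\caW\rVERT_b$.

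Collecting the three pieces gives $\lvert\sigma(A)-\varrho_{\bbZ^d}(A)\rvert\lesssim C^{\lvert X\rvert}\bigl(\e^{-g't}+\e^{vt-d(X,\Lambda^c)/\xi}+\e^{-d(X,\Lambda^c)/l'}\bigr)\lVert A\rVert$, and choosing $t$ proportional to $d(X,\Lambda^c)$ so as to balance the first two terms collapses them into a single exponential $\e^{-d(X,\Lambda^c)/\tilde l}$ with $1/\tilde l=\min\{g'/(\xi(v+g')),\,1/l'\}$ and overall prefactor $C^{\lvert X\rvert}$, which is the asserted estimate. I expect the Lieb--Robinson step to be the main obstacle: one must establish a propagation bound for $\e^{s\caL_{\Lambda'}}$ that is uniform in the volume $\Lambda'$, inherits the correct exponential spatial decay from $\lVERT\caV\rVERT_l$, and produces only an exponential-in-$\lvert X\rvert$ prefactor, so that the growth rate $v$ remains finite and the optimization over $t$ genuinely returns exponential spatial decay in $d(X,\Lambda^c)$.
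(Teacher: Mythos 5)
Your argument is correct and follows essentially the same route as the paper: exploit stationarity to write $\sigma(A)=\sigma\bigl(\e^{t\caL_{\Lambda,\Lambda'}}(A)\bigr)$, remove the boundary terms with the Lieb--Robinson propagation bound of \cite{NVZ}, invoke the theorem's relaxation and convergence estimates, and choose $t$ proportional to $d(X,\Lambda^c)$. The only cosmetic difference is that the paper compares $\e^{t\caL_{\Lambda,\Lambda'}}$ with $\e^{t\caL_{\Lambda}}$ and passes through the finite-volume stationary state $\varrho_\Lambda$ via \eqref{eq: ExpDecayFiniteVol} and \eqref{eq: DecayState}, whereas you compare with $\e^{t\caL_{\Lambda'}}$ (spelling out the Duhamel step that \cite{NVZ} packages into the volume-comparison estimate) and use the infinite-volume bounds \eqref{eq: ExistenceDyn} and \eqref{eq: DecayTime} directly.
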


This result may be viewed as a dissipative version of Yarotsky's result on ground states for weak perturbations of gapped non-interacting systems \cite{Y2}.
Here the QMS was defined by sums of local generators, and by definition each of them has a non-empty kernel containing the identity. This fact simplifies the problem considerably and is somewhat comparable with frustration-freeness when studying ground states of local Hamiltonians. Note however we are here, in case of non-equilibrium, dealing with generators that may not be self-adjoint.  \\

\subsection{Examples}
\subsubsection*{Quantum Ising Model With Dissipation} For this example we consider a chain of qubits, i.e.~each $\caH_x$, $x\in \bbZ$, is a copy of $\caH\cong \bbC^2$. We denote with $\sigma^i$, $i=1,2,3$, the usual Pauli matrices, and with $\lvert \uparrow\rangle$ and $\lvert \downarrow\rangle$ the spin up and spin down eigenstate of $\sigma^3$ in Dirac's notation.  Additionally, we put $2\sigma^{\pm}=\sigma^1\pm i \sigma^2$ such that $\sigma^{+}\lvert \downarrow\rangle=\lvert \uparrow\rangle, \sigma^{-}\lvert \uparrow\rangle=\lvert \downarrow\rangle $. 
The unperturbed product dynamics is generated by copies of a single-site generator 
\begin{equation}
 \caG(A)= i\bigl[h \sigma^3,A\bigr] + \sigma^+ A \sigma^- - {\textstyle \frac{1}{2}} \bigl\{ A, \sigma^+\sigma^-\bigr\}, \quad A\in\caB(\caH),
\end{equation}
where $h\in \bbR$ is a parameter indicating the strength of a (transverse) field, and where the expression in square brackets is the commutator and in curly brackets the anti-commutator. The coherent and dissipative part of the generator commute with each other. It gives rise to a unique stationary state $\varrho_{\caG(x)}=\lvert \downarrow \rangle \hspace{-1pt}\langle \downarrow \rvert$ at each site and \eqref{eq: AssDecay} holds $h$-uniformly for the projection $Q_x(\,\cdot\,)=\langle \downarrow \rvert \cdot\lvert \downarrow \rangle \opunit$, for example, with constants $M=4$ and $g=1/2$.  We take a translation invariant perturbation acting only on pairs of nearest neighbour sites defined through
\begin{equation}
 \caV(\{x,x+1\})(A)= i \bigl[ J \sigma_x^1\sigma_{x+1}^1 , A\bigr], \quad A\in\caA_{\{x,x+1\}}
\end{equation}
where $\sigma_x^1$ is the local operator acting on site $x$ as $\sigma^1$ and where the coupling strength $J\in \bbR$ is another parameter. The coherent part of the dynamics is that of the transverse field Ising model with Hamiltonian
\begin{equation}\label{eq: IsingHamiltonian}
 H = h\sum_{x\in \Lambda} \sigma_x^3 +J \sum_{\{x,x+1\}} \sigma_x^1\sigma_{x+1}^1.
\end{equation}
By our theorem, if the coupling $J$ is small enough (e.g.~$2J  <\e^{-2 \log 2}g$, see also Remark \ref{rem: FinRange} on finite range interactions) and for all values of $h$, the stationary state for the full dynamics generated by $\caL_\Lambda$ as in \eqref{eq: Generator} is unique even in the thermodynamic limit. We chose the particular form of perturbation above to relate to the well-known (Hamiltonian) transverse Ising model, which is exactly solvable. See also \cite{Pr}, where similar dissipative models are solved exactly.  This property is however not relevant here: uniqueness holds for any perturbation that is small enough according to the theorem. We may, for example, add a small longitudinal field $\propto \sigma^1$ at each site, then the model is not longer mapped to free fermions by a Jordan--Wigner transformation.

An immediate interesting question would be whether uniqueness persists as $J$ is increased or if there is a transition to a phase with multiple stationary states. It seems tempting (though a-priori not justified) to make a comparison with ground state properties of the Ising model in transverse field, for which indeed a quantum phase transition (in the ground state) occurs at a certain critical value of $J/h$ and there is phase coexistence above the critical value.  
This is also very much connected to a question that is not even fully resolved in the classical domain:  is it possible to devise a  one-dimensional cellular automaton with `non-degenerate' noise that can have phase coexistence, see \cite{gacs,gray}.  

\subsubsection*{Transport for Weakly Coupled Self-consistent Heat Baths}  
In our second example we start out with a product dynamics describing a quantum spin chain where each site $x\in\bbZ$ is in contact with a separate heat bath at temperature $T_x$. Given a local Hamiltonian $H_x^{}=H_x^*\in\caA_x^{}$, this contact is modelled by a QMS with unique stationary state $\varrho_{\caG(x)}=Z^{-1}\exp({-H_x/T_x})$, the Gibbs state. Its generator $\caG(x)=\caG(x,T_x)$ is detailed balance w.r.t.\ $H_x^{}$ at temperature $T_x>0$, i.e.\ it is self-adjoint 
for the inner product `weighted' with the Gibbs state $\varrho_{\caG(x)}$, 
\begin{equation}
 \langle A,B\rangle_{x}=\Tr\bigl( A^* \varrho_{\caG(x)}^s B \varrho_{\caG(x)}^{1-s} \bigr).
\end{equation}
for some $0<s<1$ and it depends analytically on $T_x$, see Remark \ref{rem: Analyticity}.  The perturbation $\caV$ is a nearest neighbour interaction of the form
\begin{equation} \label{eq: continuity equation}
 \caV (\{x,x+1\})=i [V_{x,x+1}, \cdot ], \quad \text{such that} \quad [V_{x,x+1}, H_x+H_{x+1}]=0
\end{equation}
 supposed to be sufficiently weak, so that the theorem guarantees a unique stationary state $\varrho_\Lambda=\varrho_\Lambda(T_1,\dots,T_N)$, for all volumes $\Lambda=\{1,\dots,N\}\subset \bbZ$. 
Using that the interaction conserves energy locally, i.e.\ \eqref{eq: continuity equation}, we can write the continuity equation
\begin{align}
 0&=\partial_t \varrho_\Lambda^{}\bigl( \e^{t\caL_\Lambda} (H_x) \bigr) \\
 &=\varrho_\Lambda^{}\bigl( \caV(\{x-1,x\}) (H_x) \bigr)-\varrho_\Lambda^{}\bigl( \caV(\{x,x+1\}) (H_{x+1}) \bigr) + \varrho_\Lambda^{}\bigl( \caG(x) (H_x) \bigr)\nonumber\\
 &\equiv j_x(T_1,\dots,T_N)-j_{x+1}(T_1,\dots,T_N)+J_x(T_1,\dots,T_N)\nonumber
\end{align}
where $j_x$ is the current from site $x-1$ to $x$ and $J_x$ the current originating from the heat bath at site $x$. The game here is the following: How should one choose the temperatures $T_j$ such that the currents $\boldsymbol{J}=J_2,\dots,J_{N-1}$ from the heat baths in the bulk all vanish (but not necessarily $J_1$ and $J_N$).  Of course, choosing all $T_j$ equal is a solution (corresponding to global equilibrium).  But there are other solutions,  e.g.  corresponding to a uniform heat current $j_{\mathrm{sc}}$ through the chain,
\begin{equation}
 J_1=j_x=-J_N=j_{\mathrm{sc}}, \quad x=2, \dots, N.
\end{equation}
The idea motivating such a procedure is that the coupled heat baths model chaotic degrees of freedom along the chain, so that indeed the steady currents from these reservoirs should vanish. 
The non-trivial question is now: How does the profile of $T_j$ look like for a nonzero $j_{\mathrm{sc}}$.  The problem can be solved for sufficiently small $T_N-T_1$, and the result is that the current $j_{\mathrm{sc}}$ and local temperature differences $T_{x+1}-T_x$ both scale as $1/N$ obeying Fourier's law $j_{\mathrm{sc}}=\kappa(T_x)(T_{x+1}-T_x)$ for some heat conductivity $\kappa$, see \cite{BLR}.
We do not prove this theorem here as it strays too far from the main message of this paper. However, some thought shows that the result of the present paper allows to establish this, as it shows that the local currents depend analytically and exponentially weak on distant temperature parameters. For related results on such chains of `self-consistent heat baths' in classical mechanics, see e.g.~\cite{lukkarinenself}.

\section{Proofs}

\subsection{Proof of the Theorem} For every volume $\Lambda \Subset \bbZ^d$ the QMS can be expressed in the form of the norm-convergent Dyson expansion. For any local observable $A$ with support in $X\subset\Lambda$, $X\neq \emptyset$, one obtains
\begin{equation}
\begin{split}
  \e^{t\caL_\Lambda}(A) =\e^{t\caG_\Lambda}(A)+\sum_{n=1}^\infty \int_{t_1\leq\dots\leq t_n\leq t} \hspace{-8ex}\mathrm{d}t_1\dots \mathrm{d}t_n\; \e^{(t-t_n)\caG_\Lambda}& \caV_\Lambda \e^{(t_n-t_{n-1})\caG_\Lambda} \caV_\Lambda\dots\\
  &\dots\e^{(t_2-t_1)\caG_\Lambda} \caV_\Lambda \e^{t_1\caG_\Lambda} (A).
\end{split}
\end{equation}
Given a subset $E\subset \La$ we define the projection operator
\begin{equation}
 P_\Lambda(E):=\bigl({\textstyle \bigotimes_{x\in E}} (\opunit-Q_x)\bigr) \otimes \bigl({\textstyle \bigotimes_{x\in \La\setminus E}} Q_x\bigr) 
\end{equation}
on $\caA_{\La}$. Since the one-dimensional range of each $Q_x$ is spanned by the identity we find that
\begin{equation}
 \caV(\Ga)P_\Lambda(E)=0\quad \text{if}\quad \Ga\cap E=\emptyset
\end{equation}
and in particular $\caV(\Ga)P_\Lambda(\Lambda)=0$ for every $\Gamma\subset\Lambda$. The locality of the expansion and the local relaxation assumption \eqref{eq: AssDecay} become accessible by writing out the sum of local terms in $\caV_\Lambda$ and inserting the identity in the form 
\begin{equation}
 \opunit=\sum_{E\subset \La}P_\Lambda(E)
\end{equation}
multiple times in the Dyson expansion. It can be written as
\begin{equation}\label{eq: DysonExtended}
 \e^{t\caL_\Lambda}(A) =\e^{t\caG_\Lambda}(A)+\sideset{}{^\Lambda}\sum_{\boldsymbol{\Gamma}_n,\boldsymbol{E}_n}\int_0^t\mathrm{d}s \, H_t(s,\boldsymbol{\Gamma}_n,\boldsymbol{E}_n,A),
\end{equation}
where we abbreviated $\boldsymbol{\Ga}_n=(\Ga_1,\dots,\Ga_n)$, $\bsE_n=(E_1,\dots,E_{n+1})$, with non-empty $\Ga_i, E_i \subset \La$ for all $i=1,\dots, n$, and $E_{n+1}\subset\Lambda$ also allowed to be the empty set. The above sum runs over all these $n$-tupels, $n\geq 1$, of subsets of the volume $\Lambda$. Furthermore we introduced the operator-valued function
\begin{align}\label{eq: DefinitionH}
 &H_t(s,\boldsymbol{\Gamma}_n,\boldsymbol{E}_n,A)\\
 &:=\int_{t_1\leq\dots\leq t_{n-1}\leq s} \hspace{-9ex}\mathrm{d}t_1\dots \mathrm{d}t_{n-1}\;  \e^{(t-s)\caG_\Lambda}P_\Lambda(E_{n+1})\bigl( \caV(\Gamma_n) \e^{(s-t_{n-1})\caG_\Lambda} P_\Lambda(E_n)\bigr)\dots \notag\\
 &\hphantom{:=\int_{t_1\leq\dots\leq t_{n-1}\leq s} \hspace{-9ex}\mathrm{d}t_1\dots \mathrm{d}t_{n-1}\; P_\Lambda(E_{n+1}) \e^{(t-s)\caG_\Lambda}}\dots\bigl(\caV(\Gamma_1) \e^{t_1\caG_\Lambda} P_\Lambda(E_1)\bigr) \,(A).\notag
 \end{align}
For given $\boldsymbol{\Ga}_n$ and $\bsE_n$ let us define the set
\begin{equation}
\begin{split}
 &D=D(\boldsymbol{\Ga}_n):={\textstyle \bigcup_{i=1}^n}\Ga_i
\end{split}
\end{equation}
and we state some rather direct observations regarding $H_t(s,\boldsymbol{\Gamma}_n,\boldsymbol{E}_n,A)$: 
\begin{enumerate}[(i)]
 \item $H_t$ vanishes unless
 \begin{equation}\label{eq: QueueRule}
 E_1\subset X, \quad E_{i+1} \setminus \Ga_{i}= E_{i} \setminus \Ga_{i}\quad \text{and} \quad E_i \cap \Ga_i \neq \emptyset, \quad 1\leq i \leq n
\end{equation}
where $X$ is the support of $A$.
\item $H_t$ does not depend on the volume $\Lambda$ as long as $X,D\subset \Lambda$.
\item $H_t$ does not depend on the time $t$ if $E_{n+1}=\emptyset$.
\item $H_t$ is a local operator whose support is within $E_{n+1}$. In particular it is proportional to the identity if $E_{n+1}=\emptyset$. \\
\end{enumerate}

The above expansion, or more precisely the integrand in \eqref{eq: DefinitionH}, can be depicted by diagrams that spread into space-time starting from $X$ at time zero and that are connected as a consequence of observation (i), see Figure \ref{fig: 1}. \\

\begin{figure}[h]
        \def\svgwidth{.7\textwidth}%
        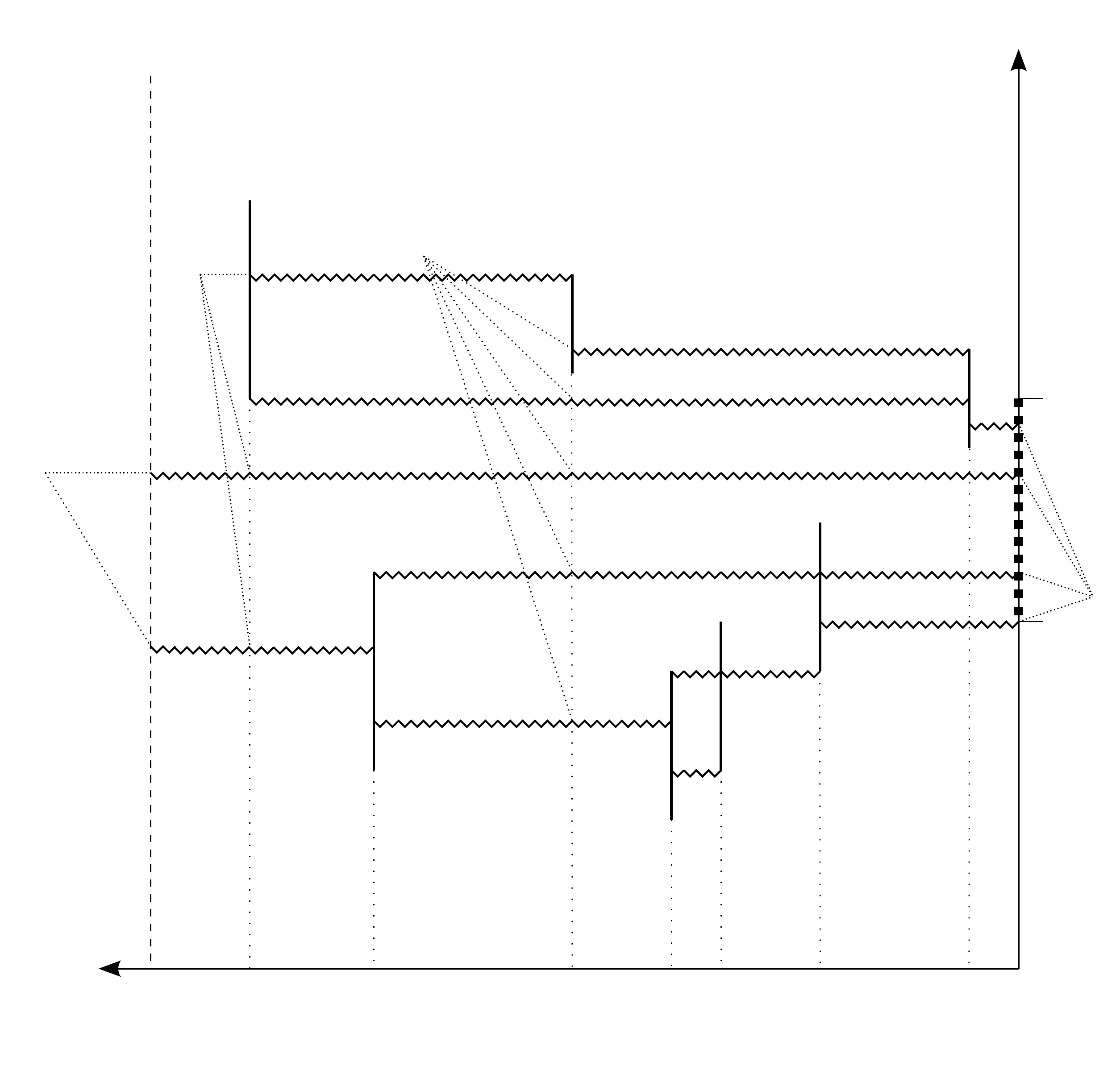
 \caption{\small (a) Sample diagram for $n=7$: the vertical lines indicate perturbations $\caV(\Gamma_i)$ acting on $\Gamma_i$ at time $t_i$, the horizontal zig-zag lines indicate the sets within $\boldsymbol{E}_n$.}\label{fig: 1}
\end{figure}

We continue the proof with two Lemmata; the first one concerns the time integral of $H_t(s,\boldsymbol{\Gamma}_n,\boldsymbol{E}_n,A)$ and then we show $\Lambda$-uniform summability over the spatial subsets $\boldsymbol{\Gamma}_n$ and $\boldsymbol{E}_n$ in the second Lemma. $\chi$ denotes the indicator function.

\begin{lemma}\label{lem: One}
 For every { $l', g'\in \bbR$, $g'<g$, there there is $l''>0$}, such that for all volumes $\Lambda\Subset\bbZ^d$, $\boldsymbol{\Gamma}_n$ and $\boldsymbol{E}_n$ within $\Lambda$, all times $t\geq 0$, and $n\geq 1$,
   \begin{align}\label{eq: TimeBound}
    &\int_{t_1}^{t_2} \mathrm{d}s\, \bigl\lVert  H_t(s,\boldsymbol{\Gamma}_n,\boldsymbol{E}_n,A) \bigr\rVert\\
    &{\leq \exp\bigl\{{-g' t \chi[E_{n+1}\neq \emptyset]}{-g' t_1\chi[E_{n+1}= \emptyset]}{-l' \lvert D\rvert}\bigr\}}\notag\\
    &\hphantom{\leq}\times C^{\lvert X\rvert}\lVert A \rVert\prod_{i=1}^n \frac{\e^{\lvert\Gamma_i\rvert/l''}\lVert \caV(\Gamma_i)\rVert_{\mathrm{cb}}}{{(g-g')}\lvert E_i\rvert}\notag
   \end{align}
   holds for all times $0\leq t_1\leq t_2 \leq t${. Here, one can take $l''=l'$ and $C=1$ under the conditions of the theorem, and $1/l''=\log (M+1) (1+1/l')$ and $C=M+1$ for the more general setting with complex parameter dependence introduced in Remark \ref{rem: Analyticity}.}\\
\end{lemma}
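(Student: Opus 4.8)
The plan is to estimate the completely bounded norm of the operator-valued integrand defining $H_t$ in \eqref{eq: DefinitionH} factor by factor, to perform the elementary nested time integrals, and only at the end to repackage the accumulated decay into the spatial and temporal weights on the right-hand side.

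First I would use that $\caG_\La=\sum_{x}\caG(x)$ makes every free-evolution factor a tensor product, $e^{\tau\caG_\La}=\bigotimes_{x}e^{\tau\caG(x)}$, and that the cb-norm is multiplicative across tensor factors and submultiplicative under composition. For a factor $e^{\tau\caG_\La}P_\La(E)$ one meets, on a site $x\in E$, the factor $e^{\tau\caG(x)}(\opunit-Q_x)$, bounded by $Me^{-g\tau}$ through \eqref{eq: AssDecay}, and on a site $x\notin E$ the factor $e^{\tau\caG(x)}Q_x=Q_x$ (since $\caG(x)Q_x=0$), of cb-norm $1$; hence $\lVert e^{\tau\caG_\La}P_\La(E)\rVert_{\mathrm{cb}}\le M^{\lvert E\rvert}e^{-g\tau\lvert E\rvert}$, and each perturbation contributes a further $\lVert\caV(\Ga_i)\rVert_{\mathrm{cb}}$.

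The \emph{main obstacle} is the control of the powers of $M$. Multiplying the per-factor bounds naively produces $M^{\sum_i\lvert E_i\rvert}$, and $\sum_i\lvert E_i\rvert$ is \emph{not} controlled by $\lvert X\rvert+\sum_i\lvert\Ga_i\rvert$: a site of $X$ lying in every $E_i$ but in no $\Ga_i$ contributes $n+1$ to the sum, so several such persistent sites give a count of order (number of persistent sites)$\times n$, which neither the spatial nor the temporal weight can absorb. The remedy, and the technical heart, is to telescope. Since $\opunit-Q_x$ is idempotent and commutes with $e^{\tau\caG(x)}$, while $\caV(\Ga_i)$ acts as the identity on every site $x\notin\Ga_i$, two consecutive factors $e^{\tau_{i+1}\caG(x)}(\opunit-Q_x)$ and $e^{\tau_i\caG(x)}(\opunit-Q_x)$ at a persistent site merge into the single $e^{(\tau_i+\tau_{i+1})\caG(x)}(\opunit-Q_x)$. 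Performing this merge along every maximal run during which a site stays active and untouched, one rewrites the integrand so that each site contributes exactly one decaying factor per active run (cb-norm $\le Me^{-gT_r}$ for a run of length $T_r$), the constant-$Q$ stretches contributing $1$. By the queue rule \eqref{eq: QueueRule} a run can begin only at time $0$ inside $X$ or at a $\Ga$-touch, so the number of runs is $R\le\lvert E_1\rvert+\sum_i\lvert\Ga_i\rvert\le\lvert X\rvert+\sum_i\lvert\Ga_i\rvert$, whereas the total decay exponent is unchanged, $\sum_r T_r=\sum_i\tau_i\lvert E_i\rvert$. Making this rearrangement rigorous at the level of cb-norms — which I would do by induction on $n$, peeling off $\caV(\Ga_n)$ and splitting $\caA_\La\cong\caA_{\Ga_n}\otimes\caA_{\La\setminus\Ga_n}$ so as to merge the factors on the untouched sites $\La\setminus\Ga_n$ — is the fiddly step; it yields $\lVert H_t\text{-integrand}\rVert_{\mathrm{cb}}\le M^{R}\,e^{-g\sum_i\tau_i\lvert E_i\rvert}\prod_i\lVert\caV(\Ga_i)\rVert_{\mathrm{cb}}$ with $R$ now linear in $\lvert X\rvert$ and $\sum_i\lvert\Ga_i\rvert$.

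With $M^R$ pulled out as a constant, the remaining integral is $\int_{t_1}^{t_2}\d s\int_{0\le u_1\le\dots\le u_{n-1}\le s}e^{-g\sum_i\tau_i\lvert E_i\rvert}$ in the consecutive gaps $\tau_i$. Using $\lvert E_i\rvert\ge1$ for $i\le n$ (guaranteed by $E_i\cap\Ga_i\ne\emptyset$) and $\sum_i\tau_i=t$, I would split $g=g'+(g-g')$: when $E_{n+1}\ne\emptyset$ the inequality $\sum_{i=1}^{n+1}\tau_i\lvert E_i\rvert\ge t$ extracts $e^{-g't}$, while when $E_{n+1}=\emptyset$ the last interval carries no decay and $\sum_{i=1}^{n}\tau_i\lvert E_i\rvert\ge s\ge t_1$ extracts $e^{-g't_1}$, reproducing the two indicator terms (and recovering observation (iii)). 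Integrating the leftover $e^{-(g-g')\tau_i\lvert E_i\rvert}$ freely over $\tau_i\in[0,\infty)$ gives $\prod_{i=1}^n\frac1{(g-g')\lvert E_i\rvert}$. Finally, writing $M^{R}=M^{\lvert X\rvert}\prod_iM^{\lvert\Ga_i\rvert}$, the factor $M^{\lvert X\rvert}$ is absorbed into $C^{\lvert X\rvert}$ and each $M^{\lvert\Ga_i\rvert}$ into the weight $e^{\lvert\Ga_i\rvert/l''}$; since $\sum_i\lvert\Ga_i\rvert\ge\lvert D\rvert$ one may strengthen the target spatial weight (which must match the theorem's $e^{-d(X,\Lambda^c)/l'}$) to $\prod_ie^{-\lvert\Ga_i\rvert/l'}$, so it suffices to take $1/l''$ large enough that $M^{\lvert\Ga\rvert}e^{-\lvert\Ga\rvert/l'}\le e^{\lvert\Ga\rvert/l''}$. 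Optimizing this split reproduces the quoted constants: $l''=l'$, $C=1$ under the theorem's hypotheses, and $C=M+1$, $1/l''=\log(M+1)(1+1/l')$ in the analytic setting of Remark \ref{rem: Analyticity}.
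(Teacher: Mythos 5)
Your proof is correct and follows essentially the same route as the paper: the merging of consecutive single-site factors $\e^{\tau\caG(x)}(\opunit-Q_x)$ along untouched sites is precisely how the paper's estimate \eqref{eq: IntegrandBound} controls the powers of $M$ (peeling off $\caV(\Gamma_n)$ and letting the free evolution on $\Lambda\setminus\Gamma_n$ run through, which yields the prefactor $(M+1)^{\lvert X\rvert+\sum_i\lvert\Gamma_i\rvert}$ with the same run-count $R\leq\lvert X\rvert+\sum_i\lvert\Gamma_i\rvert$ you identify), and the subsequent time integration in gap variables and absorption of the temporal and spatial weights via $\lvert E_i\rvert\geq 1$ and $\lvert D\rvert\leq\sum_i\lvert\Gamma_i\rvert$ are identical. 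The only slip is a sign in your final condition, which should read $M^{\lvert\Gamma\rvert}\e^{+\lvert\Gamma\rvert/l'}\leq \e^{\lvert\Gamma\rvert/l''}$ since the weight $\e^{+\lvert D\rvert/l'}$ must be paid for rather than gained from; this corrected version is the one consistent with the value $1/l''=\log(M+1)(1+1/l')$ that you then correctly quote.
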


\begin{proof}
  { In the proof we first focus on the more general setting.}
 By observation (iii) we can, for every given $\boldsymbol{\Gamma}_n$ and $\boldsymbol{E}_n$, restrict the volume to $\Lambda= X\cup D$. Note that, for all $t\geq 0$ and $E\subset \Lambda$, 
 \begin{equation}\label{eq: FreeOnE}
  \e^{t\caG_\Lambda} P_\Lambda (E)=\bigl( {\textstyle \bigotimes_{x\in E} }\, \e^{t\caG(x)}(\opunit-Q_x) \bigr)\otimes \bigl({\textstyle \bigotimes_{x\in E^c} }\,Q_x\bigr)
 \end{equation}
 is a product of single-site operators, which are bounded in the completely bounded norm by $M\e^{-gt}$ or $M+1$ by assumption \eqref{eq: AssDecay}.  The integrand in \eqref{eq: DefinitionH} can be  bounded in norm by
 \begin{align}\label{eq: IntegrandBound}
   &\bigl\lVert  \e^{(t-s)\caG_\Lambda}P_\Lambda(E_{n+1}) \bigl( \caV(\Gamma_n) \e^{(s-t_{n-1})\caG_\Lambda} P_\Lambda(E_n)\bigr)\dots  \\
   &\hphantom{\leq\times\bigl\lVert  \bigl( \caV(\Gamma_n) \e^{(t-t_{n-1})\caG_{\Lambda\setminus\Gamma_n}}}\dots\bigl(\caV(\Gamma_1) \e^{t_1\caG_X} P_X(E_1)\bigr)\,(A) \bigl\lVert \notag\\[.3\baselineskip] 
   &\leq (M+1)^{\lvert \Gamma_n\rvert}\,\e^{-g(t-s)\lvert \Gamma_n\cap E_{n+1}\rvert}\notag \\ 
   &\hphantom{\leq}\times\bigl\lVert  \bigl( \caV(\Gamma_n) \e^{(t-t_{n-1})\caG_{\Lambda\setminus\Gamma_n}}\,\e^{(s-t_{n-1})\caG_{\Gamma_n}}  P_\Lambda(E_n)\bigr)\dots \notag \\
   &\hphantom{\leq\times\bigl\lVert  \bigl( \caV(\Gamma_n) \e^{(t-t_{n-1})\caG_{\Lambda\setminus\Gamma_n}}}\dots\bigl(\caV(\Gamma_1) \e^{t_1\caG_X} P_X(E_1)\bigr)\,(A) \bigl\lVert \notag 
 \end{align}
 Repeating this step $n$ times gives the upper bound 
 \begin{align}
   & (M+1)^{\lvert X\rvert+\sum_i\lvert \Gamma_i\rvert}\,\e^{-g(t-s)\lvert E_{n+1}\rvert}\lVert A\rVert\prod_{i=1}^n\e^{-g(t_i-t_{i-1})\lvert E_{i}\rvert} \lVert \caV(\Gamma_i)\rVert_{\mathrm{cb}}
 \end{align}
where we set $t_0:=0$ and as before $s:=t_n$. The diagrammatic representation can be useful for this exercise.
Since $\lvert D\rvert\leq\sum_i \lvert E_i\rvert$, we then get
\begin{align}\label{eq: AbsorbExp}
  &\int_{t_1}^{t_2} \mathrm{d}s\, \bigl\lVert  H_t(s,\boldsymbol{\Gamma}_n,\boldsymbol{E}_n,A) \bigr\rVert\,\exp\bigl\{{{g'} t \chi[E_{n+1}\neq \emptyset]}{+{g'} t_1\chi[E_{n+1}= \emptyset]}{+ \lvert D\rvert{/l'}}\bigr\}\\
  &\leq (M+1)^{\lvert X\rvert+(1+{1/l'})\sum_i \lvert E_i\rvert}\lVert A\rVert\notag\\
  &\hphantom{(M+1)}\times \int_{t_1}^{t_2} \mathrm{d}t_n\,\int_{t_1\leq\dots\leq t_{n-1}\leq t_n} \hspace{-9ex}\mathrm{d}t_1\dots \mathrm{d}t_{n-1}\;\prod_{i=1}^n\e^{-{(g-g')}(t_i-t_{i-1})\lvert E_{i}\rvert} \lVert \caV(\Gamma_i)\rVert_{\mathrm{cb}}.\notag
\end{align}
By changing the integration variables to $s_i:=t_i-t_{i-1}$ and extending the integration domain, we obtain another upper bound for \eqref{eq: AbsorbExp},
\begin{align}
 &(M+1)^{\lvert X\rvert+(1+{1/l'})\sum_i \lvert E_i\rvert}\lVert A\rVert \prod_{i=1}^n\int_0^\infty \mathrm{d}s_i\,\e^{-g's_i\lvert E_{i}\rvert} \lVert \caV(\Gamma_i)\rVert_{\mathrm{cb}}\\
 &=(M+1)^{\lvert X\rvert}  \lVert A\rVert \prod_{i=1}^n \frac{(M+1)^{(1+{1/l'})\lvert\Gamma_i\rvert}\lVert \caV(\Gamma_i)\rVert_{\mathrm{cb}}}{{(g-g')}\lvert E_i\rvert}.\notag
\end{align}
This finishes the proof of the lemma in the setting of Remark \ref{rem: Analyticity}.

{ Under the conditions of the theorem, we proceed just as above, but we can additionally exploit the fact that $\lVert Q_x\rVert_{\mathrm{cb}}= 1$, since $Q_x$ is the $t\rightarrow \infty$ limit of a (contractive) QMS.}
\end{proof}

\begin{lemma}
 { {Recall the definition of $K$ in \eqref{eq: Konst} of the theorem.} For every $l''>0$, 
 \begin{equation}\label{eq: lem2}
  \sideset{}{^{\bbZ^d}}\sum_{\boldsymbol{\Gamma}_n,\boldsymbol{E}_n: \mathrm{(i)}}\prod_{i=1}^n \frac{\e^{\lvert\Gamma_i\rvert/l''}\lVert \caV(\Gamma_i)\rVert_{\mathrm{cb}}}{{(g-g')}\lvert E_i\rvert}\leq {(K-1)} 2^{\lvert X\rvert}
 \end{equation} 
 if $\lVERT  \caV \rVERT_l < g-g'$ for $1/l=\log 2 +1/l''$.}\\
\end{lemma}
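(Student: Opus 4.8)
The plan is to exploit the branching (``queue'') structure encoded in constraint (i) and to resum the series through a fixed-point recursion. Reading (i) forward, a configuration is built by repeatedly (a) choosing a connected $\Ga_i$ meeting the current set $E_i$, and (b) choosing the next set $E_{i+1}$, which must agree with $E_i$ off $\Ga_i$ while being arbitrary on $\Ga_i$. I would encode the whole constrained sum into a single quantity $\Phi(E)$, the sum over \emph{all} continuations (of any number of further perturbations) emanating from a set $E$, so that $\Phi$ obeys the self-consistent recursion
\begin{equation}
\Phi(E) = 1 + \frac{1}{(g-g')\lvert E\rvert}\sum_{\Ga:\,\Ga\cap E\neq\emptyset}\e^{\lvert\Ga\rvert/l''}\lVert\caV(\Ga)\rVert_{\mathrm{cb}}\sum_{E':\,E'\setminus\Ga = E\setminus\Ga}\Phi(E'),
\end{equation}
where the leading $1$ records termination (the current set being the last $E_{n+1}$, which may be empty). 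The target sum then equals $\sum_{\emptyset\neq E_1\subset X}(\Phi(E_1)-1)$, the subtraction removing the spurious $n=0$ term.

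Next I would prove the \emph{uniform} bound $\Phi(E)\le (1-r)^{-1}$ with $r:=\lVERT\caV\rVERT_l/(g-g')<1$, independently of $E$. Two exact cancellations drive this. First, summing over the $2^{\lvert\Ga\rvert}$ admissible $E'$ (free on $\Ga$, fixed off $\Ga$) produces a factor $2^{\lvert\Ga\rvert}$ which merges with $\e^{\lvert\Ga\rvert/l''}$ into $\e^{\lvert\Ga\rvert/l}$ \emph{precisely because} $1/l=\log 2+1/l''$; this is where the $\log 2$ shift enters. Second, bounding $\sum_{\Ga\cap E\neq\emptyset}(\cdots)$ by the union bound $\sum_{x\in E}\sum_{\Ga\ni x}(\cdots)$ yields a factor $\lvert E\rvert$ that cancels the $1/\lvert E\rvert$, after which $\sum_{\Ga\ni x}\e^{\lvert\Ga\rvert/l}\lVert\caV(\Ga)\rVert_{\mathrm{cb}}\le\lVERT\caV\rVERT_l$ by definition. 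Together these reduce the recursion to the scalar inequality $\Phi(E)\le 1+r\,\sup_{E'}\Phi(E')$.

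The main obstacle is that this recursion is \emph{not} well-founded in $\lvert E\rvert$: the child $E'$ can be strictly larger than $E$, so one cannot induct on the size of $E$. I would circumvent this by truncating to continuations of length at most $N$, writing the corresponding $\Phi^{(N)}$ with $\Phi^{(0)}\equiv 1$, and proving $\Phi^{(N)}(E)\le (1-r)^{-1}$ by induction on $N$; the inductive step is exactly $1 + r\,(1-r)^{-1}=(1-r)^{-1}$, and the base case holds since $(1-r)^{-1}\ge 1$. The bound is uniform in $N$, so it passes to $\Phi=\lim_N\Phi^{(N)}$, the limit existing by monotonicity and the geometric convergence guaranteed by $\lVERT\caV\rVERT_l<g-g'$.

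Finally I would assemble the estimate. Since $\Phi(E)-1\le (1-r)^{-1}-1 = r/(1-r)$ and
\begin{equation}
\frac{r}{1-r}=\frac{\lVERT\caV\rVERT_l}{g-g'-\lVERT\caV\rVERT_l}\le \frac{\epsilon}{g-g'-\epsilon}=K-1,
\end{equation}
using $\lVERT\caV\rVERT_l\le\epsilon$ and monotonicity of $t\mapsto t/(g-g'-t)$ on $[0,g-g')$, summing over the at most $2^{\lvert X\rvert}-1$ nonempty $E_1\subset X$ gives $\sum\le (2^{\lvert X\rvert}-1)(K-1)\le (K-1)\,2^{\lvert X\rvert}$, as claimed. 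The only genuinely delicate points are the truncation device that tames the non-monotone recursion and the bookkeeping that matches the two cancellations to the constant $K-1$; the rest is routine.
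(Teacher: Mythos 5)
Your proof is correct and takes essentially the same route as the paper: the two cancellations you isolate (merging the $2^{\lvert\Gamma_i\rvert}$ choices of $E_{i+1}$ with $\e^{\lvert\Gamma_i\rvert/l''}$ into $\e^{\lvert\Gamma_i\rvert/l}$ via $1/l=\log 2+1/l''$, and the union bound $\sum_{x\in E_i}\sum_{\Gamma\ni x}$ cancelling the $1/\lvert E_i\rvert$) are exactly how the paper bounds each layer by $\lVERT\caV\rVERT_l/(g-g')$ and sums the resulting geometric series to $(K-1)2^{\lvert X\rvert}$. Your fixed-point recursion with the length-$N$ truncation is just a repackaging of the paper's direct layer-by-layer factorization of the iterated sum over $n$, so the ``delicate'' non-well-foundedness you work around never actually arises there.
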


\begin{proof}
First recall the constraint \eqref{eq: QueueRule} in observation (i). Fixing $E_i$ and $\Gamma_i$ for some $1\leq i\leq n$ fully determines a family of at most $2^{\lvert \Gamma_i\rvert}$ different compatible subsets $E_{i+1}$. On the other hand, fixing $E_i$ determines a family of compatible subsets $\Gamma_i$. Then, for every $E_i\Subset\bbZ^d$,
\begin{align}
 &\sideset{}{'}\sum_{\Ga_i}\sideset{}{'}\sum_{E_{i+1}} \frac{\e^{\lvert\Gamma_i\rvert{ /l''}}\lVert \caV(\Gamma_i)\rVert_{\mathrm{cb}}}{{(g-g')}\lvert E_i\rvert}\\
 &\leq  \sum_{x\in E_i} \sum_{\Ga\ni x} \frac{2^{\lvert\Gamma_i\rvert}\e^{\lvert\Gamma_i\rvert{ /l''}}\lVert \caV(\Gamma_i)\rVert_{\mathrm{cb}}}{{(g-g')}\lvert E_i\rvert}\notag\\
 &\leq \frac{\lVERT\caV \rVERT_{l}}{{g-g'}}\quad \text{with}\quad 1/l=\log 2+ 1/l''\notag
\end{align}
 where we put primes to indicate that we sum over compatible subsets depending on $E_i$ and $E_i,\Gamma_i$ respectively.
 Therefore we obtain the majorizing geometric series
 \begin{align}
   &\sideset{}{^{\bbZ^d}}\sum_{\boldsymbol{\Gamma}_n,\boldsymbol{E}_n: \mathrm{(i)}}\prod_{i=1}^n \frac{\e^{\lvert\Gamma_i\rvert{ /l''}}\lVert \caV(\Gamma_i)\rVert_{\mathrm{cb}}}{{(g-g')}\lvert E_i\rvert}\\
   &=\sum_{n=1}^\infty \sum_{E_1\subset X} \sideset{}{'}\sum_{\Ga_1}\sideset{}{'}\sum_{E_{2}}\frac{\e^{\lvert\Gamma_1\rvert{ /l''}}\lVert \caV(\Gamma_1)\rVert_{\mathrm{cb}}}{{(g-g')}\lvert E_1\rvert}\dots\sideset{}{'}\sum_{\Ga_n}\sideset{}{'}\sum_{E_{n+1}}\frac{\e^{\lvert\Gamma_n\rvert{ /l''}}\lVert \caV(\Gamma_n)\rVert_{\mathrm{cb}}}{{(g-g')}\lvert E_n\rvert}\notag\\
   &\leq 2^{\lvert X\rvert} \sum_{n=1}^\infty \,\Bigl(\frac{\lVERT\caV \rVERT_{l}}{{g-g'}}\Bigr)^n,\notag
  \end{align}
if $\lVERT  \caV \rVERT_l < g-g'$, which proves the Lemma.\\
\end{proof}

The above Lemmata together show that the expansion \eqref{eq: DysonExtended} converges in norm, even if weighted exponentially,
\begin{align}\label{eq: ExpansionExpDec}
 &\sideset{}{^{\bbZ^d}}\sum_{\boldsymbol{\Gamma}_n,\boldsymbol{E}_n}\int_{t_1}^{t_2}\mathrm{d}s \,\bigl\lVert H_t(s,\boldsymbol{\Gamma}_n,\boldsymbol{E}_n,A)\bigr\rVert\\
 &\hphantom{\sideset{}{^{\bbZ^d}}\sum_{\boldsymbol{\Gamma}_n,\boldsymbol{E}_n}\int_{t_1}^{t_2}}\times\exp\bigl\{{{g' t}\chi[E_{n+1}\neq \emptyset]   }{+{g'} t_1\chi[E_{n+1}= \emptyset]}{+ \lvert D\rvert{/l'}}\bigr\}\notag\\
 &\leq {(K-1)} C^{\lvert X \rvert} \lVert A\rVert\notag
\end{align}
for all times $0\leq t_1,t_2\leq t$ { and with some constant $C>0$ { (that only depends on $M$)}, if $\epsilon=\lVERT  \caV \rVERT_l < g-g'$,  where either
\begin{equation}
 \begin{split}
  &1/l=\log 2 +1/l' \quad \text{or}\\
  &1/l=\log 2 +\log(M+1)(1+1/l') 
 \end{split}
\end{equation}
in the setting of the theorem or of Remark \ref{rem: Analyticity} respectively.
}

In the following, the constant $C$ represents a family of constants and its value may change from line to line. Let us describe further consequences of the Lemmata:

\noindent
(1)  The thermodynamic limit of the dynamics
\begin{align}\label{eq: DefS}
 \caS_t(A):&=\lim_{\Lambda\nearrow \bbZ^d}\e^{t\caL_\Lambda}(A)\\
 &=\e^{t\caG_X}(A)+\sideset{}{^{\bbZ^d}}\sum_{\boldsymbol{\Gamma}_n,\boldsymbol{E}_n}\int_0^t\mathrm{d}s \, H_t(s,\boldsymbol{\Gamma}_n,\boldsymbol{E}_n,A)\notag
\end{align}
exists for each local observable $A\in\caA_X$, $X\Subset \bbZ^d$, and $t\geq 0$. Since the local observables are dense in $\caA$, the infinite volume quantum Markov semigroup $\caS_t$ of the theorem can be defined by continuous extension. Let $\Lambda,\Lambda' \Subset \bbZ^d$ with $\Lambda\subset \Lambda'$. In the difference
\begin{align}
  &\e^{t\caL_\Lambda}(A)-\e^{t'\caL_{\Lambda'}}(A)\\
  &= \sideset{}{^{\Lambda'}}\sum_{\boldsymbol{\Gamma}_n,\boldsymbol{E}_n} \chi[D\cap\Lambda^c\neq \emptyset]\int_0^t \mathrm{d}s\, H_t(s,\boldsymbol{\Gamma}_n,\boldsymbol{E}_n,A)
 \end{align}
all those terms that, diagrammatically speaking, do not reach outside of $\Lambda$ cancel by observation (ii). Moreover the remaining terms in the sum decay exponentially in the distance $d(X,\Lambda^c)\leq\lvert D\rvert$,
\begin{equation}
\bigl\lVert \e^{t\caL_\Lambda}(A)-\e^{t'\caL_{\Lambda'}}(A)\bigr\rVert \leq { (K-1)}\e^{- d(X,\Lambda^c){/l'}} C^{\lvert X\rvert} \lVert A\rVert.
\end{equation}
This shows that \eqref{eq: ExistenceDyn} holds by taking the limit $\Lambda'\nearrow \bbZ^d$ and that  the convergence of \eqref{eq: DefS} is uniform for all times. Therefore $\caS_t$ is strongly continuous.

\hspace{5pt}

\noindent
(2) We take $\varrho_{\bbZ^d}$ to be the state on $\caA$ defined by
\begin{equation}
 \varrho_{\bbZ^d}(A)\opunit= \varrho_{\caG,X}(A)\opunit+\lim_{t\rightarrow \infty} \, \sideset{}{^{\bbZ^d}}\sum_{\boldsymbol{\Gamma}_n,\boldsymbol{E}_n} \chi[E_{n+1}=\emptyset]\int_0^t\mathrm{d}s \, H_t(s,\boldsymbol{\Gamma}_n,\boldsymbol{E}_n,A)\\
\end{equation}
for local observables and extend it by continuity to the full algebra of observables.  We find that
\begin{align}
 &\caS_t(A)-\varrho_{\bbZ^d}(A)\opunit\\
 &= \e^{t \caG_X}(A)-\varrho_{\caG,X}(A)\opunit\notag\\
 & +\sideset{}{^{\bbZ^d}}\sum_{\boldsymbol{\Gamma}_n,\boldsymbol{E}_n}\chi[E_{n+1}\neq\emptyset]\int_0^{t}\mathrm{d}s \, H_{t}(s,\boldsymbol{\Gamma}_n,\boldsymbol{E}_n,A)\notag\\
 & -\lim_{t'\rightarrow \infty} \,\sideset{}{^{\bbZ^d}}\sum_{\boldsymbol{\Gamma}_n,\boldsymbol{E}_n}\chi[E_{n+1}=\emptyset]\int_t^{t'}\mathrm{d}s \, H_{t'}(s,\boldsymbol{\Gamma}_n,\boldsymbol{E}_n,A).\notag
 \end{align}
Due to the fact that the unperturbed dynamics is relaxing exponentially fast, the first term is bounded by
\begin{align}
 \bigl\lVert\e^{t \caG_X}(A)-\varrho_{\caG,X}(A)\opunit\bigr\rVert&=\Bigl\lVert \sum_{\emptyset\neq E\subset X} \bigr(\e^{t \caG_X}P_X(E)\bigl)(A)\Bigr\rVert\\
 &\leq 2^{\lvert X \rvert} (M+1)^{\lvert X\rvert}\, \e^{-g t }\lVert A \rVert. \notag
\end{align}
This bound only uses \eqref{eq: AssDecay}, which is convenient in view of the remark \ref{rem: Analyticity} on analytic dependence. For products of QMS one can easily improve the above bound to obtain a prefactor that grows only linearly in $\lvert X\rvert$. By \eqref{eq: ExpansionExpDec}, the other two terms 
{ are each bounded by 
\begin{equation}
 (K-1)\e^{-g' t}C^{\lvert X \rvert}\lVert A\rVert    ,
\end{equation}
 so that the theorem's claim \eqref{eq: DecayTime} follows for some appropriate constant $C$.
}

\hspace{5pt} 

\noindent(3) Defining states $\varrho_\Lambda$ as above, but restricting to finite volumes $\Lambda$, it also follows in the same way that $\e^{t\caL_\Lambda}$ is relaxing to a unique stationary states exponentially fast,
\begin{equation}\label{eq: ExpDecayFiniteVol}
 \bigl\lVert \e^{t\caL_\Lambda}(A)-\varrho_{\Lambda}(A)\opunit\bigr\rVert\leq {K} \e^{-{g'} t} C^{\lvert X\rvert} \lVert A\rVert.
\end{equation}
Furthermore, we have
\begin{equation}\label{eq: DecayState}
 \bigl\lvert \varrho_\Lambda(A)-\varrho_{\bbZ^d}(A) \bigr\rvert\leq {(K-1)}\e^{- d(X,\Lambda^c){/l'}} C^{\lvert X\rvert} \lVert A\rVert.
\end{equation}
It follows that $\varrho_{\bbZ^d}$ is the (unique) weak* limit of these stationary states.

\hspace{5pt} 

\noindent (4) Exponential decay of correlations follows from the expansion and \eqref{eq: ExpansionExpDec}, since all terms with $\lvert D\rvert \leq d(X,Y)$ cancel in the difference \eqref{eq: ExpDecayCorr}.

\subsection{Proof of the Corollary}
By assumption, the local generators of the bulk and boundary dynamics $\caV$ and $\caW$ decay exponentially in the diameter of their support (assumed here to be connected sets). This infers the existence of an effective `propagation speed' in the system, see \cite{NVZ}: adapted to our purposes we may conclude that there exist constants $C,\mu,v >0$, such that
\begin{equation}\label{eq: LrBound}
 \bigl\lVert \e^{t\caL_{\Lambda,\Lambda'}}(A)- \e^{t\caL_{\Lambda}}(A)\bigr\rVert\leq \e^{-\mu (d(X,\Lambda^c)-v t)} C^{\lvert X\rvert} \lVert A\rVert
\end{equation}
for all local observables $A$ with support in $X\subset \Lambda$ and times $t\geq 0$. [To facilitate transferring the results from \cite{NVZ} note that, in case of the $d$-dimensional lattice, one may take $\e^{-\mu n}(1+n)^{-(1+d)}$ for the function $F_\mu(n)$ appearing in this reference.] Inserting the time evolution appropriately and using the triangle inequality as in \cite{CLMP} we obtain
\begin{align}
 &\bigl\lvert \sigma(A)-\varrho_{\Lambda}(A)\bigr\rvert\\
 &\leq \bigl\lvert \sigma\bigl(\e^{t\caL_{\Lambda,\Lambda'}}(A)\bigr)- \sigma\bigl(\e^{t\caL_{\Lambda}}(A)\bigr)\bigr\rvert+\bigl\lvert \sigma\bigl(\e^{t\caL_{\Lambda}}(A)-\sigma\bigl(\varrho_{\Lambda}(A)\opunit\bigr)\bigr\rvert\notag\\
 &\leq \bigl\lVert \e^{t\caL_{\Lambda,\Lambda'}}(A)- \e^{t\caL_{\Lambda}}(A)\bigr\rVert+\bigl\lVert \e^{t\caL_{\Lambda}}(A)-\varrho_{\Lambda}(A)\opunit\bigr\rVert\notag
\end{align}
for arbitrary $t\geq 0$. Choosing this time through $vt=d(X,\Lambda^c)/2$ hence gives
\begin{equation}
 \bigl\lvert \sigma(A)-\varrho_{\Lambda}(A)\bigr\rvert\leq \e^{-d(X,\Lambda^c)/\tilde{l}} C^{\lvert X\rvert} \lVert A\rVert, \quad {\tilde{l}:=\max\{l',2/\mu\}}
\end{equation}
by the propagation bound \eqref{eq: LrBound} and exponentially fast relaxation in finite volume \eqref{eq: ExpDecayFiniteVol}. Together with \eqref{eq: DecayState} we arrive at the Corollary.

\bibliographystyle{plain}

\end{document}